\documentclass[11pt,a4paper]{amsart}
\usepackage[T1]{fontenc}
\usepackage[latin9]{inputenc}
\setlength{\parskip}{\medskipamount}
\setlength{\parindent}{0pt}
\usepackage{amssymb}
\usepackage{color}
\usepackage{graphicx,color}
\usepackage{amsmath, amssymb, graphics}

\makeatletter
%%%%%%%%%%%%%%%%%%%%%%%%%%%%%% Textclass specific LaTeX commands.
\numberwithin{equation}{section} %% Comment out for sequentially-numbered
\numberwithin{figure}{section} %% Comment out for sequentially-numbered
  \@ifundefined{theoremstyle}{\usepackage{amsthm}}{}
  \theoremstyle{plain}
  \newtheorem{thm}{Theorem}[section]
  \theoremstyle{plain}
  
  \theoremstyle{plain}
  
  \theoremstyle{remark}
  \newtheorem{rem}[thm]{Remark}
  \theoremstyle{remark}
  
  \theoremstyle{plain}
  \newtheorem{lem}[thm]{Lemma}

\setlength{\textwidth}{14cm}
\setlength{\oddsidemargin}{1cm}
\setlength{\evensidemargin}{1cm}
\setlength{\textheight}{19cm}
\setlength{\parskip}{2mm}
\setlength{\parindent}{0em}
\setlength{\headsep}{1.5cm}

%%%%%%%%%%%%%%%%%%%%%%%%%%%%%% User specified LaTeX commands.

\def\com#1{ \hbox{#1}}

\smallskip
\def\<{{\langle }}
\def\>{{\rangle }}

\def\com#1{ \quad\hbox{#1}\quad}

\smallskip
\def\<{{\langle }}
\def\>{{\rangle }}
\def\k{\rm {\bf Kg}}
\def\m{\rm {\bf m}}
\def\s{\rm {\bf s}}
\def\um{\rm {\bf um}}
\def\ud{\rm {\bf ud}}
\def\ut{\rm {\bf ut}}

%%%%%%%%%%%%%%%%%%%%%%%%%%%%%% User specified LaTeX commands.

\begin{document}

\title[Relativistic Lagrange Points]{Existence and stability of Lagrangian points in the relativistic restricted three body problem}
\author{Oscar Perdomo}
\date{\today}

\curraddr{Department of Mathematics\\
Central Connecticut State University\\
New Britain, CT 06050\\}
\email{ perdomoosm@ccsu.edu}

%\subjclass[2000]{53C42, 53A10}
%\maketitle

\begin{abstract}
In this paper we reinvestigate the stability and existence of Lagrangian points in the circular restricted 2+1 body problem treated in the framework of the post-Newtonian approximation of the general relativity. It is well known that the stability of the Lagrangian points in the Newtonian case depends on showing that the real parts of the eigenvalues of a matrix are zero. The reason we are reinvestigating this topic is due to the fact that most of the papers written so far on the stability and existence of the relativistic restricted three body problem are not mathematically correct, they have one of the following well known mathematical errors: {\bf 1.} Showing that an expression is close to a small positive number does not show that this expression is positive {\bf 2.} Showing that the approximation of an expression is zero does not show that the expression is zero and {\bf 3.} Showing one solution of a system of equations that is obtained by doing a small perturbation of another system of equations, does not show the existence of a solution of the latter system of equations. 

In the Newtonian case the parameter $\mu=\frac{m_2}{m_1+m_2}$  describes, up to symmetries, all the possible restricted circular 2+1 body problem. Here $m_1$ and $m_2$ are the masses of the primaries. In the relativistic case, we need two parameters to describe all the possible systems, one is $\mu$ and the other one is the number $c$ that represents the speed of light in the units where the period  of the primaries is $2 \pi$ and the distance between the primaries is 1. We point out that $c$ is not necessarily a big number, it is about 10065 for the Sun-Earth system, 6262 for the Sun-Mercury system and it is about 683 for the Pulsar Binary star system, under the assumption that the Pulsar binary star system is circular. 

Even though it seems to be almost impossible to find a closed form for the Lagrangian points in the relativistic three body problem, we show in this paper how the Poincare-Miranda theorem can be used to prove the existence of Lagrangian points. One of the main results in this paper provides the exact expression for the characteristic polynomial of the matrix that determines the stability of the Lagrange points. We point out that even without having a closed form for the Lagrangian points, we can show that the characteristic polynomial has the form $\lambda^4+a_1 \lambda^2+ a_2$, with  the expression for $a_1$ and $a_2$ depending on $\mu$, $c$ and the coordinates of the Lagrangian point. The form of this polynomial shows that we indeed have stability results similar to those shown in the Newtonian case. At the end of the paper we find the coordinates of Lagrangian points for some particular systems with a precision smaller than $10^{-30}$ and we compare the new results with those already found in the literature. We conclude that the error in the previous results is big.
\end{abstract}
%We conclude that after so much rounding done in the previous papers analyzing the stability of the Lagrangian points in the relativistic case, sometimes, their approximations turn out to be worse that the one given by the non relativistic case.

%\subjclass[2000]{53C42, 53A10}
\maketitle

%%%%%%%%%%%%%%%%%%%%%%5
\section{Introduction}
%%%%%%%%%%%%%%%%%%%%%%%%%%%%

Lagrangian points could be seen as particular periodic solutions of the restricted three body problem and  they are extensively used in space missions. Even thought for practical reasons  the effect of relativity theory in the computation of the Lagrangian points may be irrelevant to a mission, it is a good idea to know the exact position of the Lagrangian points when relativity theory is considered and their stability in order to make decisions whether or not working with Newtonian physics is good enough. With the intension to set up notation for the change of units and to get familiar with the terminology, section 2 explains the Newtonian case. Section 3 displays the ODE for the relativistic case. We point out that it is not known that there are also exactly five equilibrium solutions of the ODE in the relativistic case, it is natural to conjecture that we also have 5 of these points and it is natural to also  call them Lagrangian points. We point out that the existence is far from being obvious -especially when $\mu$ is small-  due to the fact that the case $\mu=0$ degenerates to the case where there is only one massive body and two massless bodies going around. If we were to assign equilibrium points in this case, we would have to say that the ODE has infinitely many equilibrium points, all of them forming a circle. In this way the Newtonian restricted three body problem moves from an ODE having infinitely many equilibrium points when $\mu=0$ to an ODE having only five equilibrium points when $\mu>0$. In section \ref{ex} we will prove the existence of the point $L_4$ for the Earth-Sun system using the Poincare-Miranda Theorem. We point out that the existence of other equilibrium points can be done in a similar way for any other system associated with other parameters  $\mu$ and $c$. The existence of the equilibrium points $L_1$, $L_2$ and $L_3$ is somehow easier because it does not require the Poincare-Miranda theorem, in this case the problem reduces to solve an equation with one variable and therefore the intermediate value theorem can be applied. Regardless of how many equilibrium points we have, or where they are located, section \ref{cps} shows that the characteristic polynomial at {\rm any} of the equilibrium solutions -Lagrangian points-  has the form $\lambda^4+a_1 \lambda^2+ a_2$. As a consequence we obtain that whenever the two roots of the quadratic equation $\sigma^2+a_1 \sigma+ a_2=0$ are negative, then,  the roots of $\lambda^4+a_1 \lambda^2+ a_2$ have zero real part and therefore the equilibrium point is linearly stable. At this point we would like to point out that proving that a quantity is zero  cannot be done by considering approximations. For this reason, for the case of $L_4$, it is not surprising to have papers like Bhatnagar and Hallan \cite{BH} where they conclude that $L_4$ is unstable because the real part of some eigenvalues are close to a positive very small number in contrast with papers like Douskos et al \cite{D} and Ahmen et al \cite{A} where they conclude that  $L_4$ is stable for some values of the parameter $\mu$  because after a set of rounding of order $\frac{1}{c^2}$ and $\frac{1}{c^3}$ they obtain that the real part of some eigenvalues is zero. So, which one of these results is true having in mind that all of them have used rounding? We will show, even without having a closed form for the equilibrium points, that $L_4$ is stable for some open region on the set of parameters $\mu$ and $c$.

 On section \ref{comp} we consider the relativistic restricted three body problem coming from the choice of parameters $\mu=0.0384$ and  $c$  $=$ 4, 10, 50, 100, 400, 800, 1600, 3200, 6400, 12800. We use this 10 ODE systems  to compare the previous results with those obtained in this paper. We conclude that in all of these systems, the rounding error in the papers \cite{D} and \cite{A} is big. We would like to point out that according to some authors, for example \cite{M}, papers \cite{A} and \cite{D} are considered to be among the latest results regarding the stability of the Lagrangian points for the relativistic restricted three body problem.
 
The author would like to thank David R. Skillman and Andr\'es Mauricio Rivera for his valuable comments.

\section{Circular solutions of the two body problem and changing units} \label{units} Let us consider two bodies (the primaries) with masses $m_1\,  \k$ and $m_2\,  \k$ which moves in the space with positions $x$ and $y$. Let us take the gravitational constant to be equal to $G=6.67384*10^{-11}\, \m^3\, \k^{-1}\, \s^{-2}$. It is easy to check that for a given positive number $a$ the functions

$$
x(t)=\frac{-m_2\, a}{m_1+m_2} \, (\cos(\omega\,  t),\sin(\omega \, t)), \quad y(t)=\frac{m_1\, a}{m_1+m_2} \, (\cos(\omega\,  t),\sin(\omega \, t))
$$

with $\omega=\sqrt{\frac{G(m_1+m_2)}{a^3}}\, {\bf s}^{-1}$ satisfy the two body problem ODE 

$$\ddot{x}=\frac{m_2 G}{|x-y|^3} \, (y-x)\quad \ddot{y}=\frac{m_1 G}{|y-x|^3} \, (x-y)$$

This solution satisfies that the distance between the two bodies is always $a$ meters and moreover, both motions are periodic since they complete a revolution after $T=\frac{2 \pi}{\omega}=2 \pi\, \sqrt{\frac{a^3}{G(m_1+m_2)}}\, {\bf s} $. Let us change the units of mass, distance and time in the following way: Let us denote by $\um$ the unit of mass such that $1 {\um} = (m_1+m_2)\,  \k$, let us denote by $\ud$ the unit of distance such that $1\,  {\ud}=a\,  \m$ and  finally let us denote by $\ut$,  the unit of time $\ut$ such that 
$1\, {\ut} = \sqrt{\frac{a^3}{G (m_1+m_2)}}\quad \s$. Notice that using the units $\ut$ and $\ud$ we have that the distance between the two bodies is $1 \ud$ and the period of the motion is $2 \pi \, \ut$. We also have that the gravitation constant is $1\, 
\ud^3\, \um^{-1}\, \ut^{-2}$. We point out that the speed of light is

\begin{eqnarray}\label{sl}
 c=299792458 * \sqrt{\frac{a}{G (m_1+m_2)}}\, \frac{\ud}{\ut}
 \end{eqnarray}

If we denote by  $\mu=\frac{m_2}{m_1+m_2}$ and we work in the new units $\ut$, $\um$, $\ud$, then the mass of the first and  second body are $1-\mu$ and $\mu$ and  the motion of the primaries are given by 

$$ x(t)=-\mu \left(\cos (t),\sin (t)\right) \quad  y(t)=(1-\mu ) \, \left(\cos (t),\sin (t)\right) $$

Moreover, if a third body with position $z(t)$ and neglecting mass compare with $m_1$ and $m_2$ moves under the influence of the gravitational force of the primaries, then $z$ satisfies

\begin{eqnarray}\label{eq1}
\ddot{z}=\frac{(1-\mu)}{|x-z|^3}\, (x-z)+\frac{\mu}{|y-z|^3}\, (y-z)
\end{eqnarray}

A direct computation shows that if we take 

\begin{eqnarray}\label{eq2}
z = \left(\xi (t) \cos (t)-\eta (t) \sin (t),\eta (t) \cos (t)+\xi (t) \sin (t)\right)\, ,
\end{eqnarray}

then, (\ref{eq1}) reduces to 
\begin{eqnarray}\label{eq3}
\ddot{\xi}-2 \dot{\eta}=\frac{\partial w_0}{\partial \xi}
\com{and} 
\ddot{\eta}+2 \dot{\xi}=\frac{\partial w_0}{\partial \eta}
\end{eqnarray}

where,

\begin{eqnarray}\label{eq4}
w_0 = \frac{1}{2} (\xi^2+\eta^2)+\frac{1-\mu}{\sqrt{(\xi+\mu)^2+\eta^2}}+\frac{\mu}{\sqrt{(\xi+\mu-1)^2+\eta^2}}
\end{eqnarray} 

A direct verification shows that $\xi(t)=\frac{1-2 \mu}{2}$ and $\eta(t)=\frac{\sqrt{3}}{2}$ is a solutions of the (\ref{eq3}). This equilibrium point $(\frac{1-2 \mu}{2},\frac{\sqrt{3}}{2})$ is known as the Lagrangian point $L_4$. In order to analyze the stability of $L_4$ we consider the function 

$$F_0=(\dot{\xi} ,2 \dot{\eta}+\frac{\partial w_0}{\partial \xi },\dot{\eta} ,\frac{\partial w_0}{\partial \eta }-2 \dot{\xi} ) $$

as a function of the variables $\phi=(\xi,\dot{\xi},\eta,\dot{\eta})$. It is easy to check that the ODE (\ref{eq3}) is equivalent to the ODE
$\dot{\phi}=F_0(\phi)$. In order to analyze the stability of the equilibrium solution $\phi_0=(\frac{1-2 \mu}{2},0,\frac{\sqrt{3}}{2},0)$, we compute the 4 by 4 matrix $A_0=DF_0$ evaluated at $  \xi=\frac{1-2 \mu}{2},\, \eta=\frac{\sqrt{3}}{2},\, \dot{\xi}=0, \, \dot{\eta}=0$. Since we can check that the characteristic polynomial of the matrix $A_0$ is equal to

$$\lambda ^4+\lambda ^2-\frac{27}{4} (\mu -1) \mu$$

Then, we conclude that, when either $0<\mu<\frac{1}{18} \left(9-\sqrt{69}\right)$ or $\frac{1}{18} \left(9+\sqrt{69}\right)<\mu<1$, then the real part of all the eigenvalues of $A_0$ is zero and therefore $L_4$ is linearly stable. It is known that there are 5 equilibrium solutions for the ODE (\ref{eq3}); we have $L_4$, given above, $L_5=(\frac{1-2 \mu}{2},-\frac{\sqrt{3}}{2})$ and three more of the form $(\xi_1,0)$, $(\xi_2,0)$ and $(\xi_3,0)$ usually label as the Lagrangian points $L_1$, $L_2$ and $L_3$. A similar analysis to the one that we just did for $L_4$ can be done for the other Lagrange points to conclude that $L_5$ is also linear stable for the same range of the parameter $\mu$ and, $L_1$, $L_2$ and $L_3$ are linearly unstable.

\section{The ODE in the relativistic case:} The equation of the motion for the restricted three body problem are very similar to the one given by Equation (\ref{eq3}), it takes the form (see Brumberg, 1972, \cite{B} and Bhatnagar \cite{BH})

\begin{eqnarray}\label{eq6}
\ddot{\xi}-2 n \dot{\eta}=\frac{\partial w}{\partial \xi}-\frac{d}{dt} (\frac{\partial w}{\partial \dot{\xi}})\com{and} 
\ddot{\eta}+2 n \dot{\xi}=\frac{\partial w}{\partial \eta}-\frac{d}{dt} (\frac{\partial w}{\partial \dot{\eta}})
\end{eqnarray}
 
 where $w=w_0+\frac{1}{c^2}\, w_1$ with
 
\begin{eqnarray*}
w_1 &=&-\frac{3}{2} \left(1-\frac{1}{3} \mu  (1-\mu )\right) {\rho}^2+\frac{1}{8} \left(\dot{\eta}^2+2 (\dot{\eta}\xi -\dot{\xi}\eta )+\dot{\xi}^2+{\rho}^2\right)^2+\\
& & \frac{3}{2} \left(\frac{1-\mu }{{\rho_1}}+\frac{\mu }{{\rho_2}}\right) \left(\dot{\eta}^2+2 (\dot{\eta}\xi -\dot{\xi}\eta )+\dot{\xi}^2+{\rho}^2\right)-\frac{1}{2} \left(\frac{(1-\mu )^2}{{\rho_1}^2}+\frac{\mu ^2}{{\rho_2}^2}\right)+\\
& & \mu (1-\mu)\left(\left(4 \dot{\eta}+\frac{7 \xi }{2}\right) \left(\frac{1}{{\rho_1}}-\frac{1}{{\rho_2}}\right)-\frac{1}{2} \eta ^2 \left(\frac{\mu }{{\rho_1}^3}+\frac{1-\mu }{{\rho_2}^3}\right)+\left(\frac{3 \mu -2}{2 {\rho_1}}-\frac{1}{{\rho_1} {\rho_2}}+\frac{1-3 \mu }{2 {\rho_2}}\right)\right.\\
n&=&1-\frac{3}{2 c^2} \left(1-\frac{1}{3} \mu(1-\mu) \right)\\
\rho&=& \sqrt{\xi^2+\eta^2},\quad \rho_1=\sqrt{(\xi+\eta)^2+\eta^2}\com{and} \rho_2=\sqrt{(\xi+\eta-1)^2+\eta^2}
\end{eqnarray*}

\subsection{The system of equations}

The equilibrium points of the system of differential equations given by (\ref{eq6}) are the solutions of the system $f=0$ and $g=0$ where  $f=\frac{\partial w}{\partial \xi }$ and $g=\frac{\partial w}{\partial \eta }$ evaluate at $\dot{\xi}=\dot{\eta}=0$.

\section{Existence  of $L_4$ in the relativistic case using the Poincare-Miranda theorem}\label{ex}

\subsection{Existence of $L_4$ for the Earth-Sun system}

For computation in this section, we will take earth and the sun moving with a constant distance between them of $a_0=149597870700$ {\bf m} with  the mass of the sun equal to $M_0=1.988544* 10^{30}\, {\bf Kg}$ and the mass of the Earth equal to $5.9729 *10^{24} {\bf Kg}$. We will also will be taking the speed of light to be $c_0=299792458 \, \frac{{\bf m}}{{\bf s}}$. Using this data we have that the values for $\mu$ and $c$ are given by 
%Information taken from the file Mass of the sun and Units
$$\mu= \frac{59729}{19885499729}\approx 3.00365*10^{-6}$$

and

$$ c=c_0 * \sqrt{\frac{a_0}{G (m_0+M_0)}}=\frac{149896229 \sqrt{\frac{1495978707}{3317816087784734}}}{10}\approx 10065.3$$

We will prove the existence of the relativistic point $L_4$ for the system Earth-Sun-mass zero body. This is, we will prove the existence of a point $(\xi_0,\eta_0)$ that is within a distance of $10^{-15}$ of the point $(\frac{1-2 \mu}{2},\frac{\sqrt{3}}{2})$ that satisfies the equation 
\begin{eqnarray}\label{the eq}
f(\xi_0,\eta_0)=0\com{and} g(\xi_0,\eta_0)=0
\end{eqnarray} 

In order to prove the existence of $(\xi_0,\eta_0)$, let us consider the following five points

\begin{eqnarray*}
Z_0&=&\left(\frac{2499985012616009587660193140271}{5000000000000000000000000000000},\frac{1082531750278361975463116188557}{1250000000000000000000000000000}\right)\\
P_1&=& \left(\frac{2499985512616009587660193140271}{5000000000000000000000000000000},\frac{4330127145451017576343432330693}{5000000000000000000000000000000}\right)\\
P_2&=&\left(\frac{2499984762616009587660193140271}{5000000000000000000000000000000},\frac{4330127145451017576343432330693}{5000000000000000000000000000000}\right) \\
P_3&=&\left(\frac{2499985512616009587660193140271}{5000000000000000000000000000000},\frac{2165063356219154276435264800649}{2500000000000000000000000000000}\right) \\
P_4&=&\left(\frac{2499984762616009587660193140271}{5000000000000000000000000000000},\frac{2165063356219154276435264800649}{2500000000000000000000000000000}\right)
\end{eqnarray*}

And let $\beta_1$ be the line that connects $P_2$ with $P_1$, $\beta_2$ be the line that connect $P_4$ with $P_3$, $\beta_3$ be the line that connect $P_3$ with $P_1$ and $\beta_4$ be the line that connect $P_4$ with $P_2$. More precisely,
\begin{eqnarray*}
\beta_1(t)&=&t P_1+(1-t) P_2\\
\beta_2(t)&=&t P_3+(1-t) P_4\\
\beta_3(t)&=&t P_1+(1-t) P_3\\
\beta_4(t)&=&t P_2+(1-t) P_4\\
\end{eqnarray*}

\begin{figure}[hbtp]
\begin{center}\includegraphics[width=.7\textwidth]{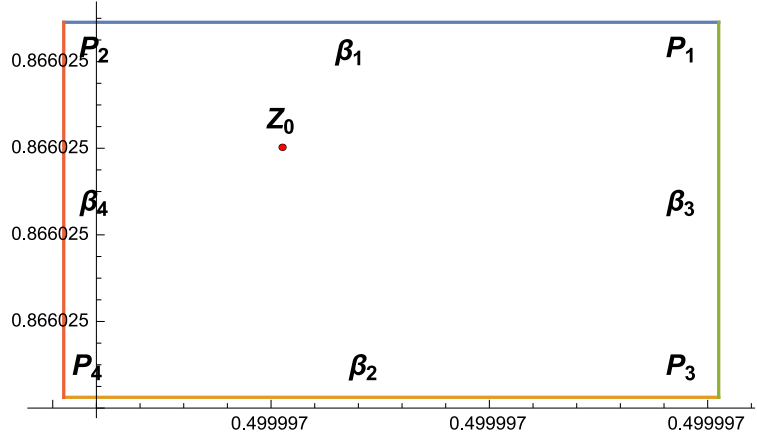}
\end{center}
\caption{The Poincare Miranda theorem guarantees that the system of equations $f=0$ and $g=0$ has a solution inside the region above }\label{fig1}
\end{figure}
\begin{thm}
There is solution of the system of equations  $f=0$ and $g=0$ inside the region delimited by the union of the curves $\beta_1$, $\beta_2$, $\beta_3$ and $\beta_4$. Moreover, we have that the first six significant digits of the functions evaluated at the points $Z_0$, $P_1$, $P_2$, $P_3$ and $P_4$ are given by 

\begin{eqnarray*} 
f(P_1)&=& 1.124997\dots * 10^{-7} \qquad g(P_1)\, =\,  1.94854\dots * 10^{-7}\\
f(P_2)&=& -2.22772\dots * 10^{-13} \qquad g(P_2)\, =\,  3.94516\dots * 10^{-13}\\
f(P_3)&=& 4.60545\dots * 10^{-13}  \qquad g(P_3)\, =\, -7.63052\dots * 10^{-13}\\
f(P_4)&=& -1.12499\dots * 10^{-7} \qquad g(P_4)\, =\, - 1.94855\dots * 10^{-7} \\
f(Z_0)&=& -1.14508 \dots * 10^{-32} \qquad g(Z_0)\, =\, -4.25190 \dots * 10^{-32} 
\end{eqnarray*}

We also have that the function $g>0$ on $\beta_1$, $g<0$ on $\beta_2$, $f>0$ on $\beta_3$ and $f<0$ on $\beta_4$. As a consequence of the Poincare-Miranda Theorem we conclude that there exists a point $P_0=( \xi_0,\eta_0)$ inside the region bounded by the four curves $\beta_i$ such that  $f(P_0)=g(P_0)=0$.
\end{thm}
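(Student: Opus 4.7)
The strategy separates into two parts: a clean application of the two-dimensional Poincaré--Miranda theorem, and a rigorous numerical certification of the sign conditions stated on the four edges $\beta_i$. Recall that Poincaré--Miranda in dimension two says: if $R=[a,b]\times[c,d]$ and $F_1,F_2:R\to\mathbb{R}$ are continuous, if $F_1$ has opposite signs on the two vertical sides $\{a\}\times[c,d]$ and $\{b\}\times[c,d]$, and if $F_2$ has opposite signs on the two horizontal sides $[a,b]\times\{c\}$ and $[a,b]\times\{d\}$, then there exists $(x,y)$ in the interior of $R$ with $F_1(x,y)=F_2(x,y)=0$. By inspection the four corners $P_1,P_2,P_3,P_4$ pair off into two values of $\xi$ and two values of $\eta$, so $\beta_1,\beta_2$ are horizontal edges (constant $\eta$) and $\beta_3,\beta_4$ are vertical edges (constant $\xi$) of an axis-aligned rectangle; the sign assertions on $f$ along $\beta_3,\beta_4$ and on $g$ along $\beta_1,\beta_2$ are then exactly the hypotheses of Poincaré--Miranda with $F_1=f$ and $F_2=g$. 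Because the primaries sit at $(-\mu,0)$ and $(1-\mu,0)$, both $f$ and $g$ are real-analytic on the tiny rectangle around $L_4$, so continuity is automatic, and the existence of $(\xi_0,\eta_0)$ follows once the sign conditions are in hand.

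To verify the ten corner values and the four sign conditions I would use rigorous interval arithmetic at high working precision, say one hundred decimal digits, which is ample compared with the $10^{-13}$ scale that appears at $P_2$ and $P_3$. The coordinates of $Z_0,P_1,\ldots,P_4$ are exact rationals, so $\rho^2$, $\rho_1^2$, $\rho_2^2$ evaluate there to exact rationals; enclosing each square root in a narrow certified interval and propagating through the elementary closed-form expressions for $f$ and $g$ produces narrow interval enclosures of $f(P_i),g(P_i),f(Z_0),g(Z_0)$. Comparing these enclosures with the printed six-digit values and with zero verifies the tabulated part of the theorem and, crucially, pins down the signs at the four corners even though two of them have magnitude as small as $10^{-13}$.

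For the sign of $g$ along $\beta_1$ (the other three edges being completely symmetric), I would parametrise $\beta_1(t)=(\xi_{P_2}+t(\xi_{P_1}-\xi_{P_2}),\eta_{P_1})$ for $t\in[0,1]$ and use one of two approaches. The first is direct: subdivide $[0,1]$ into short sub-intervals, compute an interval enclosure of $g\circ\beta_1$ on each one by interval arithmetic, and check that each enclosure lies strictly in $(0,\infty)$, refining near $P_2$ where $g$ is smallest. The second, and conceptually cleaner, is to enclose $\partial_\xi g$ on the whole rectangle, verify that it is strictly positive there, and thereby deduce that $g\circ\beta_1$ is monotone increasing in $t$, so $g\circ\beta_1(t)\ge g(P_2)>0$. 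Near the Newtonian $L_4$ the Hessian entries of $w_0$ are of order unity while the rectangle has width $\sim 10^{-7}$ and height $\sim 10^{-7}$, so a first-order (or, if needed, second-order) Taylor bound has plenty of room.

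The main obstacle is not theoretical but numerical precision: because $g$ varies over roughly six orders of magnitude along $\beta_1$ (from $\sim 10^{-13}$ at $P_2$ to $\sim 10^{-7}$ at $P_1$), the interval enclosures of the function and of the relevant partial derivatives must be sharp enough to certify the sign of $g$ near the small endpoint, which forces high working precision and, in the subdivision approach, many refinements near $P_2$ and $P_3$. Once the ten numerical values and the four sign conditions on the $\beta_i$ have been certified in this way, Poincaré--Miranda delivers the equilibrium $(\xi_0,\eta_0)$ in the interior of the rectangle, completing the proof.
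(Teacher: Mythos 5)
Your proposal is correct and follows essentially the same route as the paper: both evaluate $f$ and $g$ at the exact rational corner points, establish the sign of $g$ on $\beta_1,\beta_2$ and of $f$ on $\beta_3,\beta_4$ by showing the relevant directional derivative (approximately $1.29903$ along the unit edge directions) does not vanish, hence the functions are monotone along each edge, and then invoke the Poincar\'e--Miranda theorem on the resulting axis-aligned rectangle. Your insistence on certified interval enclosures is, if anything, more rigorous than the paper's reliance on Mathematica's \texttt{RealDigits} applied to exact expressions, but it is the same argument.
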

%  directional derivatives in the direction of the curves $\beta_i$ of the functions $f$ and $g$ do not change sign.

\begin{proof}
The proof to the theorem relies on the fact that we have an exact expression (an analytic expression) for $f$ and $g$ and the fact that programs like Mathematica allow us to precisely compute a certain amount of real digits of an exact expression. In order to obtain these digits, it is required that we work with exact numbers, that is the reason we decided to use rational numbers and not decimals. The computation for the values of the functions $f$ and $g$ were obtained by using the command $\text{RealDigits}[f(Z_0),10,6]$ and $\text{RealDigits}[g(Z_0),10,6]$ form the program Wolfram Mathematica 10. Likewise for the points $f(P_i)$ and $g(P_i)$. It is not difficult to check that the directional derivative of $g$ along the velocity of the curves $\beta_1$ and $\beta_2$ does not change sign, they both are approximately $1.29903$, and also, the directional derivative of $f$ along the curves $\beta_3$ and $\beta_4$ does not change sign, they both are  approximately $1.29903$. Then, we conclude that the function $g$ is monotonic along $\beta_1$ and $\beta_2$, this fact along with the values of $g$ at the endpoints allows to prove that $g$ is positive on $\beta_1$ and negative on $\beta_2$. A similar arguments holds for the function $f$. This concludes the proof  of the  theorem.
\end{proof}

\begin{rem}
The value $Z_0$ was initially computed to have a good approximation of the equations $f=0$ and $g=0$ near $L_4$. This good approximation is needed to find the curves $\beta_i$ that satisfy the conditions of the Poincare-Miranda Theorem.
\end{rem}

\section{Exact expression for the characteristic polynomial at the equilibrium points }\label{cps}

The following theorem provides an expression for the characteristic polynomial of a system of the type given by the relativistic three body problem.

\begin{thm}\label{cp}
Let us consider the potential $U=U(x,\dot{x},y,\dot{y})$ and let us consider the following system of ODE

$$\ddot{x}-2 k \dot{y}=\frac{\partial U}{\partial x}-\frac{d\, }{dt}(\frac{\partial U}{\partial \dot{x}}),  \quad
\ddot{y}+2 k \dot{x}=\frac{\partial U}{\partial y}-\frac{d\, }{dt}(\frac{\partial U}{\partial \dot{y}})$$
\end{thm}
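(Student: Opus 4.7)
My plan is to reduce the implicit second-order system to an explicit first-order one and then exploit its underlying Lagrangian/Hamiltonian structure to force the characteristic polynomial into biquadratic form.

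First, I would introduce the state vector $\phi = (x, \dot x, y, \dot y)$ and expand $\frac{d}{dt}(\partial U/\partial \dot x)$ by the chain rule; this produces $U_{\dot x x}\dot x + U_{\dot x y}\dot y + U_{\dot x \dot x}\ddot x + U_{\dot x \dot y}\ddot y$, and similarly for the $\dot y$ equation. The two ODEs then become a linear system
$$M \begin{pmatrix}\ddot x \\ \ddot y\end{pmatrix} = R(x, \dot x, y, \dot y), \qquad M = \begin{pmatrix} 1 + U_{\dot x \dot x} & U_{\dot x \dot y} \\ U_{\dot y \dot x} & 1 + U_{\dot y \dot y}\end{pmatrix}.$$
Near an equilibrium $\phi_0 = (x_0, 0, y_0, 0)$ (where $f$ and $g$ vanish), $M$ is close to the identity and invertible, so inversion produces an explicit first-order system $\dot\phi = F(\phi)$. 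I would then compute $A := DF(\phi_0)$ entry by entry in terms of the second partials of $U$ at $\phi_0$ and the constant $k$.

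The key structural observation is that the given ODE is precisely the Euler--Lagrange system of
$$L(x, \dot x, y, \dot y) = \tfrac{1}{2}(\dot x^2 + \dot y^2) + k(x\dot y - y \dot x) + U(x, \dot x, y, \dot y),$$
as one checks directly: the $-2k\dot y$ and $2k\dot x$ Coriolis-type terms arise from differentiating $k(x\dot y - y\dot x)$. Provided $M$ is nondegenerate, the Legendre transform in the velocity variables yields a Hamiltonian $H$, so the flow is Hamiltonian. A standard fact then says that the linearization $A$ at any fixed point is infinitesimally symplectic, $A^T J + J A = 0$ with $J$ the standard $4 \times 4$ symplectic matrix; equivalently, the spectrum of $A$ is invariant under $\lambda \mapsto -\lambda$. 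Hence $\det(\lambda I - A)$ is an even polynomial in $\lambda$, which for a $4 \times 4$ matrix means exactly $\lambda^4 + a_1 \lambda^2 + a_2$, with $a_1$ the sum of the $2 \times 2$ principal minors of $A$ and $a_2 = \det A$.

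To extract the explicit formulas claimed in the theorem, I would then unpack $A = M^{-1} \cdot (\text{linear part of the RHS})$ at $\phi_0$. Knowing a priori that the $\lambda^3$ and $\lambda$ coefficients of $\det(\lambda I - A)$ must vanish halves the algebraic work: only $\tr(A^2)$ and $\det(A)$ need to be expanded carefully. The main obstacle is not conceptual but notational, namely keeping track of the sixteen entries of $A$ after inverting the $2 \times 2$ matrix $M$ and collecting them into symmetric combinations of the mixed second partials $U_{xx}, U_{xy}, U_{yy}, U_{x\dot x}, U_{x\dot y}, \dots$ evaluated at $\phi_0$. A purely algebraic alternative, bypassing the Hamiltonian interpretation, is also available: one expands $\det(\lambda I - A)$ by hand and verifies by direct calculation, using the symmetries $U_{\dot x \dot y} = U_{\dot y \dot x}$ together with the vanishing of $U_x$ and $U_y$ at $\phi_0$, that the odd-degree coefficients of $\lambda$ cancel.
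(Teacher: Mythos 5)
Your proposal is correct, and its mechanical skeleton --- rewrite the implicit system as $M(\ddot x,\ddot y)^{T}=R$ with $\det M=d$, invert near the equilibrium to get $\dot\phi=F(\phi)$, and linearize --- is essentially what the paper does; the paper merely keeps $F_2,F_4$ implicit and differentiates their defining equations rather than inverting $M$ first, which is the same linear algebra. Where you genuinely diverge is the Lagrangian/Hamiltonian observation. The paper obtains the vanishing of the $\lambda^3$ and $\lambda$ coefficients only as the outcome of expanding the characteristic polynomial of the explicit $4\times4$ matrix, with no structural explanation; you instead identify the system as the Euler--Lagrange equations of $L=\tfrac12(\dot x^2+\dot y^2)+k(x\dot y-y\dot x)+U$ (this checks out: the $\mp 2k$ Coriolis terms do come from differentiating $k(x\dot y-y\dot x)$), note that the velocity Hessian of $L$ is precisely $M$ so that the hypothesis $d\ne0$ is exactly the Legendre nondegeneracy condition, and conclude from the symplectic symmetry $\lambda\mapsto-\lambda$ of the spectrum (the linearizations in $(q,\dot q)$ and $(q,p)$ coordinates being similar) that the characteristic polynomial is even a priori. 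That buys a conceptual reason for the biquadratic form and halves the bookkeeping, at the cost of still having to grind out $\det A$ and the sum of the principal $2\times2$ minors to recover the stated formulas for $a_1\,d$ and $a_2\,d$ --- which is where the paper's implicit-differentiation expressions for the partials of $F_2$ and $F_4$ reappear unchanged. One small correction: the invertibility of $M$ at $\tilde L_0$ should be justified by the hypothesis $d\ne0$ (which is exactly $\det M\ne0$ there), not by $M$ being ``close to the identity''; the latter is true in the relativistic application, where the velocity dependence of $U$ is of order $c^{-2}$, but it is not assumed in the theorem.
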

where $k$ is a constant. If $L_0=(x_0,y_0)$ is an equilibrium point of the system above and 
$d= 1+\frac{\partial^2U}{\partial \dot{x}^2}+\frac{\partial^2U}{\partial \dot{y}^2}+\frac{\partial^2U}{\partial \dot{x}^2} \frac{\partial^2U}{\partial \dot{y}^2}-\left(\frac{\partial^2U}{\partial \dot{x} \partial \dot{y}}\right)^2$ is not zero at $\tilde{L}_0=(x_0,0,y_0,0)$, then, the characteristic polynomial of the matrix that describes the linearization of the ODE at $L_0$ is given by

$$ \lambda^4 + a_1\lambda^2+a_2$$

where,
\begin{eqnarray*}
a_2\, d&=&\frac{\partial^2U}{\partial y^2} \frac{\partial^2U}{\partial x^2}-\left(\frac{\partial^2U}{\partial x \partial y}\right)^2
\end{eqnarray*}

and 

\begin{eqnarray*}
a_1\, d&=&-4 k \frac{\partial^2U}{\partial y \partial \dot{x}}+4 k \frac{\partial^2U}{\partial x \partial \dot{y}}+\left(\frac{\partial^2U}{\partial y \partial \dot{x}}\right)^2+\left(\frac{\partial^2U}{\partial x \partial \dot{y}}\right)^2-\frac{\partial^2U}{\partial y^2}-\frac{\partial^2U}{\partial y^2} \frac{\partial^2U}{\partial \dot{x}^2}-\\
& &2 \frac{\partial^2U}{\partial y \partial \dot{x}} \frac{\partial^2U}{\partial x \partial \dot{y}}+2 \frac{\partial^2U}{\partial \dot{x} \partial \dot{y}} \frac{\partial^2U}{\partial x \partial y}-\frac{\partial^2U}{\partial \dot{y}^2} \frac{\partial^2U}{\partial x^2}-\frac{\partial^2U}{\partial x^2}+4 k^2
\end{eqnarray*}

\begin{proof}
The ODE in this theorem can be reduce to the first order ODE $\dot{\phi}=F(\phi)$ with $\phi=(x,\dot{x},y,\dot{y})$ and 
$$ F(\phi)=(\dot{x},{F_2}(\phi),\dot{y},{F_4}(\phi))$$
and the functions ${F_2}$ and ${F_4}$ are given as the solution, near $\tilde{L}_0=(x_0,0,y_0,0)$, of the system of equations

\begin{eqnarray*}
{F_2}-2k\dot{y}&=&\frac{\partial U}{\partial x}-\dot{x} \frac{\partial^2U}{\partial \dot{x} \partial x}-{F_2} \frac{\partial^2U}{\partial \dot{x}^2}-\dot{y} \frac{\partial^2U}{\partial \dot{x} \partial y}-{F_4} \frac{\partial^2U}{\partial \dot{x} \partial \dot{y}}\\
{F_4}+2k\dot{x}&=&\frac{\partial U}{\partial y}-\dot{x} \frac{\partial^2U}{\partial \dot{y} \partial x}-{F_2} \frac{\partial^2U}{\partial \dot{x} \partial \dot{y} }-\dot{y} \frac{\partial^2U}{\partial \dot{y} \partial y}-{F_4} \frac{\partial^2U}{\partial \dot{y}^2}
\end{eqnarray*}

Recall that we have that ${F_2}(\tilde{L}_0)={F_4}(\tilde{L}_0)=0$. If we compute the partial derivative with respect to $x$ to the system of equations above and we evaluate at $\tilde{L}_0$, we get the following system of equation 

\begin{eqnarray*}
{F_2}_x&=&\frac{\partial^2 U}{\partial x^2}-{F_2}_x \frac{\partial^2U}{\partial \dot{x}^2}-{F_4}_x \frac{\partial^2U}{\partial \dot{x} \partial \dot{y}}\\
{F_4}_x&=&\frac{\partial^2 U}{\partial x\partial y}-{F_2}_x \frac{\partial^2U}{\partial \dot{x} \partial \dot{y} }-{F_4}_x \frac{\partial^2U}{\partial \dot{y}^2}
\end{eqnarray*}

This is a linear system on ${F_2}_x$ and  ${F_4}_x$ with solution solution satisfying,

\begin{eqnarray*}
{F_2}_x\, d&=&\frac{\partial^2U}{\partial \dot{y}^2} \frac{\partial^2U}{\partial x^2}+\frac{\partial^2U}{\partial x^2}-\frac{\partial^2U}{\partial \dot{x} \partial \dot{y}} \frac{\partial^2U}{\partial x \partial y}\\
{F_4}_x\, d&=& \frac{\partial^2U}{\partial \dot{x}^2} \frac{\partial^2U}{\partial x \partial y}+\frac{\partial^2U}{\partial x \partial y}    -
\frac{\partial^2U}{\partial \dot{x} \partial \dot{y}} \frac{\partial^2U}{\partial x^2}
\end{eqnarray*}

Likewise we can obtain expression for ${F_2}_y$ and ${F_4}_y$ and for ${F_2}_{\dot{x}}$, $g_{\dot{x}}$ and finally for ${F_2}_{\dot{y}}$, ${F_4}_{\dot{y}}$ evaluated at the point $\tilde{L}_0$. The theorem follow after replacing these expression for the partial derivative of the functions ${F_2}$ and ${F_4}$ into the characteristic polynomial of the matrix

$$
\left(
\begin{array}{cccc}
 0 & 1 & 0 & 0 \\
 {F_2}_x & {F_2}_{\dot{x}} & {F_2}_y & {F_2}_{\dot{y}} \\
 0 & 0 & 0 & 1 \\
 {F_4}_x & {F_4}_{\dot{x}} & {F_4}_y & {F_4}_{\dot{y}} \\
\end{array}
\right)
$$

\end{proof}

\section{Comparing the results in this paper with those found before}\label{comp}

As we pointed out before, the value $c$ that represents the speed of light with respect to the units {\bf ut} and {\ud}, varies according to the formula $ c=c_0 * \sqrt{\frac{a}{G M}}$ where $c_0=299792458$ and $M=m_1+m_2$ is the mass of the system. Since the value $c$ varies, it is part of the parameters that describe the relativistic restricted three body problem. 

If we want to talk about error, we would like to go back to the units meters and seconds. Recall that $a$ represent the distance between the primaries in meters and $M$ is the mass of the system in Kilograms. In this section we will be comparing our results with those obtained in 2002 by Douskos and Perdios, \cite{D}, and the results obtained in 2006 by Ahmed, El-Salam and El-Bar \cite{A}. 

Notice that when the equilibrium point $L_4$ is stable, then the roots of the characteristic polynomial are of the form $\pm \omega_1\, i$ and $\pm \omega_2\, i$. In this case, we expect to have periodic solutions (in the synodic frame of reference $(\xi,\eta)$) with periods close to 

\begin{eqnarray}\label{t1t2}
T_1= \frac{2 \pi}{\omega_1}\com{and} T_2=\frac{2 \pi}{\omega_2}
\end{eqnarray}

Since in \cite{A} the authors show the stability of $L_4$ for all positive values of $c$ as long as $0<\mu<0.0384$ and in \cite{D} the authors show the stability of $L_4$ for all values of $c>\sqrt{\frac{65}{6}}\approx 3.291$ as long as $0<\mu<\frac{1}{2}-\frac{\sqrt{69}}{18}\left( 1+\frac{17}{27 c^2}\right)$, then, in order to compare the error of the previous two papers, we will consider 10 systems, each one of them with total mass equal to the mass of the sun $M_0$,  $\mu=0.034$, and with $c$ taking the values $c_1=4$, $c_2=10$, $c_3=50$, $c_4=100$, $c_5=400$, $c_6=800$, $c_7=1600$, $c_8=3200$, $c_{9}=6400$ and $c_{10}=12800$. For each one of these 10 points we will compute the coordinates $Z_i=(\xi_i,\eta_i)$ of a solution near  $Z_0=\left(\frac{1-2 \mu }{2} (\frac{5}{4 c^2}+1) ,\frac{\sqrt{3}}{2}\, (1-\frac{6 \mu ^2-6 \mu +5}{12 c^2})\right)$ that solve the system of equations $f=g=0$ with a precision smaller than $10^{-30}$ and we will call this solution the ``exact solution''. We will compute $|Z_i-Z_0|$ for each of the 10 points and the order of precision of the solution $Z_0$. We will also compute, using the exact solution,  the values for $T_1$ and $T_2$ defined at the beginning in this section and then we will compare them with  the values for $T_1$ and $T_2$ found in the papers \cite{A} and \cite{D}.  We need the following two expression for the characteristic polynomial in order to provide the values of $T_1$ and $T_2$ from the previous papers.
\begin{rem}
Douskos and Perdios, \cite{D}, found that for values of $c$ bigger than $\sqrt{\frac{65}{6}}$, the  characteristic equation of the ODE (\ref{eq6}) can be approximated by the polynomial

$$ p_{2002}=\lambda^4 +\left(1-\frac{9}{c^2}\right)\lambda^2+\frac{9\mu(1-\mu)}{4}\left(3-\frac{65-12\mu(1-\mu)}{2 c^2} \right)$$

 Ahmed, El-Salam and El-Bar, \cite{A}, found that the  characteristic equation of the ODE (\ref{eq6}) can be approximated by the polynomial

$$ p_{2006}=\lambda^4 +\left(\frac{\frac{33}{4}-\frac{189}{16} \mu  (1-\mu )}{c^2}+1\right)\lambda^2+\frac{405 \mu ^4}{32 c^2}-\frac{405 \mu ^3}{16c^2}+\frac{10521 \mu ^2}{256 c^2}-\frac{7281 \mu }{256 c^2}+\frac{27}{4} (1-\mu ) \mu$$

\end{rem}

Before we proceed we point out that the distance between the primaries  $a$ and their period $T$ is related to $c$ and $M$ in the following way

\begin{eqnarray}
a=GM\left(\frac{c}{c_0} \right)^2 \com{and} T=2 \pi M G\left(\frac{c}{c_0}\right)^3
\end{eqnarray}

From the equation above, we easily conclude the following lemma,
 
\begin{lem}\label{error}  An error $\Delta d\, {\ud}$ in distance  and an error $\Delta t \, {\ut}$ in time in the synodic frame of reference $(\xi,\eta)$ correspond to an error of 

$$ a\, \Delta d= GM\left(\frac{c}{c_0} \right)^2\, \Delta d\quad {\bf meters}$$

and 

$$ \frac{T}{2\pi}\, \Delta t=  M G\left(\frac{c}{c_0}\right)^3\, \Delta t\quad {\bf seconds}$$
 
\end{lem}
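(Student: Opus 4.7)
The plan is to derive the two conversions directly from the definitions of the units $\ud$ and $\ut$ introduced in Section \ref{units}. There are no substantial obstacles: the lemma is essentially bookkeeping, and the only content is rewriting $a$ and $T/(2\pi)$ in terms of the pair $(M,c)$.

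First I would recall from Section \ref{units} the defining identities
$1\,\ud = a\,\m$ and $1\,\ut = \sqrt{a^3/(GM)}\,\s$,
where $M=m_1+m_2$. Consequently, a displacement of $\Delta d\,\ud$ in the synodic frame converts to $a\,\Delta d\,\m$, and a time interval of $\Delta t\,\ut$ converts to $\sqrt{a^3/(GM)}\,\Delta t\,\s$. Since $T=2\pi\sqrt{a^3/(GM)}\,\s$, the latter equals $(T/(2\pi))\,\Delta t\,\s$.

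Next I would eliminate $a$ in favor of $c$ using the formula (\ref{sl}), namely $c=c_0\sqrt{a/(GM)}$. Squaring gives $a=GM(c/c_0)^2$, which is the stated formula for $a$. Substituting this into $\sqrt{a^3/(GM)}$ yields $\sqrt{(GM)^2(c/c_0)^6}=GM(c/c_0)^3$, and therefore $T=2\pi GM(c/c_0)^3$, which is the stated formula for $T$.

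Finally I would plug these into the conversions from the first step: the distance error becomes $a\,\Delta d = GM(c/c_0)^2\,\Delta d$ meters, and the time error becomes $(T/(2\pi))\,\Delta t = GM(c/c_0)^3\,\Delta t$ seconds, matching the two formulas in the statement. The only point that requires a line of care is keeping track of the single square root when converting from $a$ to $c$; everything else is direct substitution.
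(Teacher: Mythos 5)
Your proposal is correct and follows the same route as the paper: the paper simply records the identities $a=GM(c/c_0)^2$ and $T=2\pi GM(c/c_0)^3$ (obtained by squaring $c=c_0\sqrt{a/(GM)}$) and notes the lemma follows, while you additionally spell out the unit conversions $1\,\ud=a\,\m$ and $1\,\ut=\sqrt{a^3/(GM)}\,\s$ from Section \ref{units}. All the algebra checks out, so this is a complete (and slightly more explicit) version of the paper's argument.
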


The following table provide solution of the equation $f=0$ and $g=0$ with a precision smaller than $10^{-30}$, this is, we have that 

$$ |f(\xi_i,\eta_i)|<10^{-30}\com{and} |g(\xi_i,\eta_i)|<10^{-30}  $$
\begin{center}
    \begin{tabular}{| c | c| }
 \hline\noalign{\medskip}
 $c_i$ & solution $(\xi_i,\eta_i)$ with precision $<\, 10^{-30}$ \\
 \hline\noalign{\medskip}
  $4$ &$\displaystyle{ \left(\frac{1269274068083047668315001319947}{2500000000000000000000000000000},\frac{2099727919061389308673386312351}{2500000000000000000000000000000}\right) }$\\
\hline  \noalign{\medskip}
$10$ &$\displaystyle{ \left\{\frac{589933273547627837960417751707}{1250000000000000000000000000000},\frac{431230420634190356869315441943}{500000000000000000000000000000}\right\}} $\\
\hline\noalign{\medskip}
$50$ & $\displaystyle{\left\{\frac{4662331909210469007263660596223}{10000000000000000000000000000000},\frac{4329433007965962475682519470747}{5000000000000000000000000000000}\right\}}$ \\
\hline\noalign{\medskip}
$100$ & $\displaystyle{\left\{\frac{145643206851728280439111229549}{312500000000000000000000000000},\frac{4329953660006884115357445313463}{5000000000000000000000000000000}\right\} }$\\ 
\hline\noalign{\medskip}
$ 400 $ & $\displaystyle{\left\{\frac{291252275419734701378298811871}{625000000000000000000000000000},\frac{8660232373592265679769530789291}{10000000000000000000000000000000}\right\}}$\\
\hline\noalign{\medskip}
$800$ & $\displaystyle{\left\{\frac{932001820318321886652316804353}{2000000000000000000000000000000},\frac{8660248621851491754868337036919}{10000000000000000000000000000000}\right\} }$\\
\hline\noalign{\medskip}
$ 1600$ & $\displaystyle{\left\{\frac{4660002275392444335389570820631}{10000000000000000000000000000000},\frac{4330126341925273094801840139691}{5000000000000000000000000000000}\right\}} $\\
\hline\noalign{\medskip}
$3200$ & $\displaystyle{\left\{\frac{4660000568847769958396390882401}{10000000000000000000000000000000},\frac{8660253699346200359582469403071}{10000000000000000000000000000000}\right\}} $\\
\hline\noalign{\medskip}
$ 6400$ &$ \displaystyle{\left\{\frac{7456000227539073870838002657}{16000000000000000000000000000},\frac{1082531744152482132899736029813}{1250000000000000000000000000000}\right\}} $\\
 \hline\noalign{\medskip}
 $12800$ &$\displaystyle{ \left\{\frac{2330000017776489479899171749009}{5000000000000000000000000000000},\frac{8660254016688255186688034652061}{10000000000000000000000000000000}\right\}} $\\
 \hline\noalign{\medskip}
\end{tabular}
\end{center}

The precision of the solutions $Z_0=\left(\frac{1-2 \mu }{2} (\frac{5}{4 c^2}+1) ,\frac{\sqrt{3}}{2}\, (1-\frac{6 \mu ^2-6 \mu +5}{12 c^2})\right)$ are given by the following two table

\begin{center}
    \begin{tabular}{| c | c|c|c|c|c| }
    \hline
    $c_i$ & $4$&$10$&$50$&$100$&$400$\\
    \hline\noalign{\medskip}
    Max$ \{ |f(Z_0)|,|g(Z_0)| \}$ &$0.0025$&$0.0000701$&$1.13*10^{-7}$&$7.1*10^{-9}$&$2.7*10^{-11}$\\
    \hline\noalign{\medskip}
    \end{tabular}
 \end{center}
 
 \begin{center}
    \begin{tabular}{| c | c|c|c|c|c| }
    \hline
    $c_i$ & $800$&$1600$&$3200$&$6400$&$12800$\\
    \hline\noalign{\medskip}
    Max$ \{ |f(Z_0)|,|g(Z_0)| \}$ &$1.7*10^{-12}$&$1.08*10^{-13}$&$6.7*10^{-15}$&$4.23773*10^{-16}$&$2.6*10^{-17}$ \\
    \hline\noalign{\medskip}
    \end{tabular}
 \end{center}
 
 \begin{rem}
 A direct verification shows that  if $c=4$ and $\mu=0.034$ then, using theorem \ref{cp} to compute the characteristic polynomial, we obtain that its roots are  the four values given by $\pm 0.0878256\pm0.580403 i$. Therefore this equilibrium point is not stable. This results contradicts the theorem shown in \cite{A} where they proved that this equilibrium point must be stable.
 
The following table compares the roots of the following polynomials:

\begin{itemize}
\item
 The polynomial provided by Theorem \ref{cp} after replacing $(\xi,\eta)$ with the solutions with a precision smaller than $10^{-30}$ given above.
 \item
  The polynomial obtained in the 2002 paper \cite{D}. 
  \item  
  The polynomial obtained in the 2006 paper \cite{A}.
\item
The Newtonian polynomial, this is, the polynomial $\lambda ^4+\lambda ^2-\frac{27}{4} (\mu -1) \mu$. This is the polynomial that we obtained when we do not use relativity. In particular we have replaced $(\xi,\eta)$ with$\left(\frac{1-2 \mu }{2}  ,\frac{\sqrt{3}}{2} \right)$ 
 \end{itemize}
 
 \end{rem}
 {\tiny
\begin{center}
    \begin{tabular}{| c | c| c | c | c |}
 \hline\noalign{\medskip}
 $c_i$ & $p_{\, \rm exact}=0 $& $p_{2002}=0 $& $p_{2006}=0 $ & $p_{Newton}=0 $\\
 \hline\noalign{\medskip}
  $4$ &$\pm 0.0878256 \pm 0.580403 i$&$\pm 0.218784 i,\pm1.2307 i$&$\pm 0.345951 i, \pm1.17119 i$&$\pm 0.5759905  i, \pm 0.817456 i $\\
\hline  \noalign{\medskip}
$10$ &$\pm 0.594508336, \pm 0.751015 i$&$\pm 0.479625 i, \pm0.92734 i$&$\pm 0.50929 i,\pm 0.905121 i$&$\pm 0.5759905  i, \pm 0.817456 i $\\
\hline\noalign{\medskip}
$50$ & $\pm 0.57661177 i,\pm 0.81482 i$ &$\pm 0.570577 i,\pm 0.823433 i$&$\pm 0.572415 i,\pm 0.82188 i $&$\pm 0.5759905  i, \pm 0.817456 i $\\
\hline\noalign{\medskip}
$100$ & $\pm 0.57614517 i,\pm 0.816797  i$&$\pm 0.574613 i, \pm  0.818975 i$&$\pm 0.575084 i, \pm 0.818575 i $&$\pm 0.5759905 i, \pm 0.817456 i$\\ 
\hline\noalign{\medskip}
$ 400 $ & $\pm 0.5760001 i, \pm 0.817415 i$&$\pm 0.5759039 i, \pm 0.817552 i$&$\pm 0.5759336 i,\pm 0.817527 i $&$\pm 0.57599 i,\pm 0.817456 i $\\
\hline\noalign{\medskip}
$800$ & $\pm 0.575992904 i, \pm 0.817446 i $&$\pm 0.575969i,\pm 0.81748 i $&$\pm 0.575976 i, \pm 0.817474 i  $&$\pm 0.5759905i, 0.817456 i$\\
\hline\noalign{\medskip}
$ 1600$ &$\pm 0.57599109 i,\pm 0.817454 i$&$\pm 0.575985 i, \pm 0.817462 i$&$\pm 0.575987 i, \pm 0.817461 i$&$\pm 0.5759905 i, \pm 0.817456 i$\\
\hline\noalign{\medskip}
$3200$ & $\pm 0.57599064 i, \pm 0.817456 i$&$\pm 0.575989 i, \pm 0.817458 i$&$\pm 0.57599 i, \pm 0.817457 i $&$\pm 0.5759905  i, \pm 0.817456 i $\\
\hline\noalign{\medskip}
$ 6400$ &$\pm 0.57599053 i, \pm 0.817456 i$&$\pm 0.57599 i, \pm 0.817457 i$&$\pm 0.57599 i,\pm 0.817457 $&$\pm 0.5759905  i, \pm 0.817456 i $\\
 \hline\noalign{\medskip}
 $12800$ &$\pm 0.57599050 i, \pm 0.817456 i $&$\pm 0.57599 i, \pm 0.817456 i$&$\pm 0.57599 i, \pm 0.817456 i$&$\pm 0.5759905  i, \pm 0.817456 i $\\
 \hline\noalign{\medskip}
\end{tabular}
\end{center}
}

The following table compares the error of the period $T_1$ and $T_2$, see equation (\ref{t1t2}), when we compute them first using the roots of the polynomial $p_{2002}$, then,  using the roots of the polynomial $p_{2006}$ and finally using the polynomial $p_{Newton}=\lambda ^4+\lambda ^2-\frac{27}{4} (\mu -1) \mu$. We will be assuming that the exact values for $T_1$ and $T_2$ are those obtained by solving the system of equations that determine $L_4$ with a precision smaller than $10^{-30}$. We compute the error using Lemma (\ref{error}) assuming that the mass of the total system is the mass of the sun $M_0=1.988544* 10^{30}$. We point out that if we assume that the mass of the system is, let us say 5 times $M_0$, then all the error in the table would be 5 times bigger.

 {\tiny
\begin{center}
    \begin{tabular}{| c | c| c | c | }
 \hline\noalign{\medskip}
 $c_i$ & Error in seconds using $p_{2002} $ & Error in seconds using $p_{2006} $  &Error in seconds using $p_{Newton}$ \\
 \hline\noalign{\medskip}
$10$ &$0.00783531,\, 0.0124688 $&$0.00701609 ,\, 0.00871046 $&$0.00334932,\, 0.00167358$\\
\hline\noalign{\medskip}
$50$  &$0.049659, \, 0.0709552$&$0.0407816,\, 0.0491932$&$0.0153121, \, 0.00723655 $\\
\hline\noalign{\medskip}
$100$&$0.10076,  \,  0.143247$&$0.0822706, \, 0.0990907 $&$0.0305573 \, 0.014425$\\ 
\hline\noalign{\medskip}
$ 400 $ &$0.404933, \, 0.574751$&$0.329999,\, 0.397276 $&$0.122147,\, 0.0576405 $\\
\hline\noalign{\medskip}
$800$ &$0.810058,\, 1.14968$   &$0.660091, \, 0.794644$   &$0.244285, \, 0.115275$\\
\hline\noalign{\medskip}
$ 1600$ &$1.62021, \, 2.29945$  &$1.32023, \, 1.58933$  &$0.488565, \, 0.230547$\\
\hline\noalign{\medskip}
$3200$ &$3.24047, \, 4.59895$ &  $2.64048,\, 3.17869 $    &   $0.977129, \, 0.461093$\\
\hline\noalign{\medskip}
$ 6400$ &$6.48097, \, 9.19791$ &  $5.28098,\, 6.35739 $    &   $1.95426, \, 0.922185$\\
 \hline\noalign{\medskip}
  $12800$ &$12.9619, \, 18.3958$  &   $10.562, \, 12.7148$  &  $3.90851, \, 1.84437 $\\
 \hline\noalign{\medskip}
 \end{tabular}
\end{center}
}

The following table compute the distance between the Lagrangian point $L_4$,  computed with a precision smaller than $10^{-30}$ and  the coordinates given by the non relativistic coordinate $(\frac{1-2 \mu}{2},\frac{\sqrt{3}}{2})$. The table also contains the distance between  the Lagrangian point $L_4$,  computed with a precision smaller than $10^{-30}$, and the coordinates  $\left(\frac{1-2 \mu }{2} (\frac{5}{4 c^2}+1) ,\frac{\sqrt{3}}{2}\, (1-\frac{6 \mu ^2-6 \mu +5}{12 c^2})\right)$. Again we are using Lemma (\ref{error}) to compute these distances assuming that the mass of the total system is the mass of the sun $M_0=1.988544* 10^{30}$. We point out that if we assume that the mass of the system is, let us say 5 times $M_0$, then all the error in the table would be 5 times bigger.

 {\tiny
\begin{center}
    \begin{tabular}{| c | c| c | }
 \hline\noalign{\medskip}
 $c_i$ & Distance to $\left(\frac{1-2 \mu }{2} (\frac{5}{4 c^2}+1) ,\frac{\sqrt{3}}{2}\, (1-\frac{6 \mu ^2-6 \mu +5}{12 c^2})\right)$
 in meters  & Distance to $\left(\frac{1-2 \mu }{2}  ,\frac{\sqrt{3}}{2}\right)$ in meters\\
 \hline\noalign{\medskip}
$4$    &    $ 163.873 $ & $ 1162.89$\\
\hline\noalign{\medskip}
$10$    &    $ 23.095 $ & $ 1023.76$\\
\hline\noalign{\medskip}
$50$  &  $0.904045$& $1001.79 $\\
\hline\noalign{\medskip}
$100$&  $0.22586$& $1001.12 $\\ 
\hline\noalign{\medskip}
$ 400 $ &$0.0141133$& $1000.91 $\\
\hline\noalign{\medskip}
$800$ & $0.00352835$  & $ 1000.9$ \\
\hline\noalign{\medskip}
$ 1600$   &$0.000882579$  & $1000.9 $\\
\hline\noalign{\medskip}
$3200$ &  $0.000221304 $    & $ 1000.9$\\
\hline\noalign{\medskip}
$ 6400$  &  $0.0000589137 $ & $1000.9 $      \\
 \hline\noalign{\medskip}
  $12800$ &   $0.0000547835$  & $ 1000.9$\\
 \hline\noalign{\medskip}
 \end{tabular}
\end{center}
}

\section{Conclusion} 
\begin{itemize}
\item
When considering the relativistic case, the rounding error introduced by forgetting about the terms $\frac{1}{c^n}$ with $n\ge3$ is so big that, in all of the cases considered here, the characteristic polynomial given by the non relativistic case produced more accurate information that the characteristic polynomial introduced in the papers written in 2002, \cite{D}, and 2006, \cite{A}. 
\item
The conclusion found in \cite{A}, that states that $L_4$ is stable as long as $\mu<0.0384$ is not true.
\item
Even though the characteristic polynomials found before does not provide good approximations for $T_1$ and $T_2$, we have that the approximation for $L_4$ given by $\left(\frac{1-2 \mu }{2} (\frac{5}{4 c^2}+1) ,\frac{\sqrt{3}}{2}\, (1-\frac{6 \mu ^2-6 \mu +5}{12 c^2})\right)$ seems to be a good approximation.
\item
The Poincare-Miranda Theorem provides an useful tool to find the Lagrangian points with any given desired precision. 

\end{itemize}

%{\color{blue} https://www.youtube.com/watch?v=fSmQyeKcj5k}

%%%%%%%%%%%%%%%%%%%%%%%%%%%%%%%%%%%%%%%%%%%%
%%%%%%%%%%%%%%%%%%%%%%%%%%%%%%%%%%%%%%%%%%%%

\end{document}